\documentclass[11pt]{article}%
\usepackage{amsmath}
\usepackage{amsfonts}
\usepackage{amssymb}
\usepackage{graphicx}%
\setcounter{MaxMatrixCols}{30}
\newtheorem{theorem}{Theorem}

\newtheorem{corollary}[theorem]{Corollary}

\newtheorem{definition}[theorem]{Definition}
\newtheorem{example}[theorem]{Example}

\newtheorem{lemma}[theorem]{Lemma}

\newtheorem{remark}[theorem]{Remark}

\newenvironment{proof}[1][Proof]{\noindent\textbf{#1.} }{\ \rule{0.5em}{0.5em}}
\setlength{\oddsidemargin}{0 in}
\setlength{\evensidemargin}{0 in}
\setlength{\topmargin}{0 in}
\setlength{\textwidth}{6.5 in}
\begin{document}
\centerline{{\large \textbf{Factorization of Difference Equations by Semiconjugacy} }}
\vspace{1ex}
\centerline{{\large \textbf{with Application to Non-autonomous Linear Equations} }}
\footnotetext{Key words: Time-dependent form symmetry, semiconjugate
factorization, general linear equation, Riccati difference equation,
eigensequence}

\vspace{4ex}

\centerline{H. SEDAGHAT*} \footnotetext{\noindent*Department of Mathematics,
Virginia Commonwealth University, Richmond, Virginia 23284-2014
\par
USA, Email: hsedagha@vcu.edu}

\vspace{3ex}

\begin{quote}
\noindent{\small \textbf{Abstract.} The existence of a semiconjugate relation
permits the transformation of a higher order difference equation on a group
into an equivalent triangular system of two difference equations of lower
orders. Introducing time-dependent form symmetries in this paper enables us to
identify the semiconjugate property in a larger set of non-autonomous
difference equations than previously considered. We show that there is a
substantial class of equations having this feature that includes the general
(non-autonomous, non-homogeneous) linear equation with variable coefficients
in an arbitrary algebraic field.}
\end{quote}

\section{Introduction}

Difference equations of order greater than one that are of the following type%
\begin{equation}
x_{n+1}=f_{n}(x_{n},x_{n-1},\ldots,x_{n-k}) \label{dek}%
\end{equation}
determine the forward evolution of a variable $x_{n}$ in discrete time since
the time index or the independent variable $n$ is integer-valued with $n\geq0$.

In previous studies of semiconjugate factorizations of difference equations of
type (\ref{dek}), e.g., \cite{hd1}, \cite{hsinvcrt}, \cite{hsarx} or
\cite{hsijpam}, the form symmetry linking the higher dimensional unfolding map
of the original equation to that of the lower dimensional factor was assumed
to be independent of $n$. While this assumption did not substantially curtail
the applicability of the method, it did rule out certain non-autonomous
equations. For example, the method worked for non-homogeneous linear equations
with constant coefficients but did not apply to linear equations with variable coefficients.

The main goal of this article is to extend the aforementioned factorization
method to allow \textit{time-dependent} form symmetries where the form
symmetry may depend explicitly on the independent variable $n$. This extension
is significant as it covers all non-autonomous equations of type (\ref{dek}).
In particular, the extended method may be applied to general (non-autonomous,
non-homogeneous) linear equations over arbitrary algebraic fields to show that
such equations admit semiconjugate factorizations via eigensequences (i.e.,
the solutions of an associated discrete Riccati difference equation of lower
order). For ease of reference we state some of the basic concepts and notation
here; additional background material for this article is available in
\cite{hsarx}.

As usual, the number $k$ in (\ref{dek}) is a fixed positive integer and $k+1$
represents the \textit{order} of the difference equation (\ref{dek}). The
underlying space of variables $x_{n}$ is a group $G$ and $f_{n}:G^{k+1}%
\rightarrow G$ is a given function for each $n\geq1$. If $f_{n}=f$ does not
explicitly depend on $n$ then (\ref{dek}) is said to be \textit{autonomous};
it is \textit{non-autonomous }otherwise. A (forward) \textit{solution} of
Eq.(\ref{dek}) is a sequence $\{x_{n}\}_{n=-k}^{\infty}$ that is recursively
generated by (\ref{dek}) from a set of $k+1$ initial values $x_{0}%
,x_{-1},\ldots,x_{-k}\in G.$ Forward solutions have traditionally been of
greater interest in discrete models that are based on Eq.(\ref{dek}) although
other types of solutions (e.g., those having domain $\mathbb{Z}$, the set of
all integers) can also be readily defined.

Each $f_{n}$ is \textquotedblleft unfolded\textquotedblright\ by the
associated vector map $F_{n}:G^{k+1}\rightarrow G^{k+1}$ that are defined as%
\begin{equation}
F_{n}(u_{0},\ldots,u_{k})=[f_{n}(u_{0},\ldots,u_{k}),u_{0},\ldots
,u_{k-1}],\quad u_{j}\in G\text{ for }j=0,1,\ldots,k. \label{unfold}%
\end{equation}

The \textit{unfoldings} $F_{n}$ determine the equation
\[
(y_{0,n+1},y_{1,n+1},\ldots,y_{k,n+1})=F_{n}(y_{0,n},y_{1,n},\ldots,y_{k,n})
\]
in $G^{k+1}$. Each vector $(y_{0,n+1},\ldots,y_{k,n+1})$ represents a
\textit{state }of the system, or of Eq.(\ref{dek}); $G^{k+1}$ is the
\textit{state space}, in analogy to the phase space in differential equations.

\section{Semiconjugate relation and factorization}

Let $F_{n}$ be the unfolding on $G^{k+1}$ of $f_{n}$ for each $n.$ Then
(\ref{dek}) is equivalent to%
\begin{equation}
X_{n+1}=F_{n}(X_{n}),\quad X_{n}=(x_{n},\ldots,x_{n-k}). \label{recgenmp}%
\end{equation}

We are interested in deriving a lower dimensional equation%
\begin{equation}
Y_{n+1}=\Phi_{n}(Y_{n}),\quad Y_{n}=(y_{n},\ldots,y_{n-m+1}),\ m\leq k
\label{rgmy}%
\end{equation}
for (\ref{recgenmp}). If there exists a sequence of maps $H_{n}:G^{k+1}%
\rightarrow G^{m}$ such that for every solution $\{X_{n}\}$ of (\ref{recgenmp}%
)
\begin{equation}
Y_{n}=H_{n}(X_{n}),\quad n=0,1,2,\ldots\label{yhx}%
\end{equation}
is a solution of (\ref{rgmy}) then
\[
\Phi_{n}(H_{n}(X_{n}))=\Phi_{n}(Y_{n})=Y_{n+1}=H_{n+1}(X_{n+1})=H_{n+1}%
(F_{n}(X_{n})).
\]

Therefore, (\ref{yhx}) is satisfied for all solutions of (\ref{recgenmp}) and
(\ref{rgmy}) if and only if the sequence $\{H_{n}\}$ of maps satisfies the
following equality for all $n$%
\begin{equation}
H_{n+1}\circ F_{n}=\Phi_{n}\circ H_{n}. \label{tdscr}%
\end{equation}

If the mappings $H_{n}$ are independent of $n,$ i.e., $H_{n}=H$ for all $n$
then Eq.(\ref{tdscr}) reduces to
\begin{equation}
H\circ F_{n}=\Phi_{n}\circ H \label{scr}%
\end{equation}
namely, the time-independent semiconjugate relation\ as defined in prior
studies. We can now give the following more general defintion.

\begin{definition}
\label{tdfsdef}\textit{Let }$k\geq1$\textit{, }$1\leq m\leq k.$ If there is a
sequence of surjective maps $H_{n}:G^{k+1}\rightarrow G^{m}$ such that
Eq.(\ref{tdscr}) is satisfied for a given pair of function sequences
$\{F_{n}\}$ and $\{\Phi_{n}\}$ then we say that $F_{n}$ is
\textbf{semiconjugate} to $\Phi_{n}$ for each $n$ and refer to the sequence
$\{H_{n}\}$ as a \textbf{(time-dependent) form symmetry} of Eq.(\ref{recgenmp}%
) or equivalently, of Eq.(\ref{dek}). Since $m<k+1,$ the form symmetry
$\{H_{n}\}$ is \textbf{order-reducing}.
\end{definition}

Technically, a time-dependent form symmetry can also be defined as a
\textit{single map}
\[
H:\mathbb{N}\times G^{k+1}\rightarrow G^{m},\quad H(n;u_{0},\ldots
,u_{k})=H_{n}(u_{0},\ldots,u_{k})
\]

We choose the sequence definition due to its more intuitive content.

The following result extends its time-independent analog in \cite{hsarx} and
makes precise the concept of semiconjugate factorization for the recursive
difference equation (\ref{dek}).

\begin{lemma}
\label{tdfsthm}\textit{Let }$k\geq1$\textit{, }$1\leq m\leq k$\textit{, let
}$h_{n}:G^{k-m+1}\rightarrow G$\textit{\ for }$n\geq-m+1$\textit{\ be a
sequence of functions on a given non-trivial group }$G$ and define the
functions $H_{n}:G^{k+1}\rightarrow G^{m}$ by\textit{\ }%

\begin{equation}
H_{n}(u_{0},\ldots,u_{k})=[u_{0}\ast h_{n}(u_{1},\ldots,u_{k+1-m}%
),\ldots,u_{m-1}\ast h_{n-m+1}(u_{m},\ldots,u_{k})]. \label{tdfs}%
\end{equation}
Then the following statements are true:\textit{\ }

(a) The function $H_{n}$ defined by (\ref{tdfs}) is surjective for each fixed
$n\geq0$.

(b) \textit{If }$\{H_{n}\}$\textit{\ is an order-reducing form symmetry then
the difference equation (\ref{dek})\ is equivalent to the system of equations
}%

\begin{align}
t_{n+1}  &  =\phi_{n}(t_{n},\ldots,t_{n-m+1}),\label{tdf}\\
x_{n+1}  &  =t_{n+1}\ast h_{n+1}(x_{n},\ldots,x_{n-k+m})^{-1} \label{tdcf}%
\end{align}

\textit{whose orders }$m$\textit{\ and }$k+1-m$\textit{\ respectively, add up
to the order of (\ref{dek}).}

(c) The map $\Phi_{n}:G^{m}\rightarrow G^{m}$ in (\ref{tdscr}) is the
unfolding of Eq.(\ref{tdf}) for each $n\geq0$; i.e., each $\Phi_{n}$ is of
scalar type.
\end{lemma}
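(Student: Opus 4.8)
The plan is to prove the three parts in the order (a), (c), (b), since surjectivity is needed to handle an arbitrary argument of $\Phi_n$ in (c), and the equivalence in (b) is cleanest once (c) has identified $\Phi_n$ as an unfolding. Throughout I will read off from (\ref{tdfs}) that the $(j+1)$-st coordinate of $H_n$ is $u_j \ast h_{n-j}(u_{j+1},\ldots,u_{j+k+1-m})$ for $0\le j\le m-1$, so that the time-shift in the subscript of $h$ is tied to the coordinate index.

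For part (a), I would fix $n\ge 0$ and a target $(w_0,\ldots,w_{m-1})\in G^m$ and build a preimage by downward back-substitution. Assign $u_m,\ldots,u_k$ arbitrarily, say all equal to the group identity, and then solve for $u_{m-1},u_{m-2},\ldots,u_0$ in turn: once $u_{j+1},\ldots,u_k$ are fixed, the $(j+1)$-st coordinate equation reads $u_j\ast(\text{known element})=w_j$, which has the unique solution $u_j=w_j\ast h_{n-j}(u_{j+1},\ldots,u_{j+k+1-m})^{-1}$ because $G$ is a group. Every target being attained, $H_n$ is surjective.

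For part (c), I would take an arbitrary $Y=(y_0,\ldots,y_{m-1})\in G^m$, use (a) to write $Y=H_n(X)$ with $X=(u_0,\ldots,u_k)$, and evaluate $\Phi_n(Y)=H_{n+1}(F_n(X))$ using (\ref{tdscr}). Writing $F_n(X)=(v_0,\ldots,v_k)$ with $v_0=f_n(u_0,\ldots,u_k)$ and $v_i=u_{i-1}$ for $i\ge 1$ from (\ref{unfold}), the $(i+1)$-st coordinate of $H_{n+1}(F_n(X))$ for $1\le i\le m-1$ becomes $u_{i-1}\ast h_{n+1-i}(u_i,\ldots,u_{i+k-m})$; since $n+1-i=n-(i-1)$ and the $h$-arguments coincide, this is exactly the $i$-th coordinate $y_{i-1}$ of $Y=H_n(X)$. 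Hence $\Phi_n(Y)=(\phi_n(y_0,\ldots,y_{m-1}),\,y_0,\ldots,y_{m-2})$, where $\phi_n(Y)$ is the first coordinate $f_n(u_0,\ldots,u_k)\ast h_{n+1}(u_0,\ldots,u_{k-m})$; this first coordinate is a well-defined function of $Y$ alone because $\Phi_n$ is a genuine map. This is the unfolding form of the scalar equation (\ref{tdf}), establishing (c).

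For part (b), I would set $t_n$ equal to the first coordinate of $Y_n=H_n(X_n)$, i.e. $t_n=x_n\ast h_n(x_{n-1},\ldots,x_{n-k-1+m})$, so that the coordinatewise time-shift in (\ref{tdfs}) gives $Y_n=(t_n,t_{n-1},\ldots,t_{n-m+1})$. Since $\{H_n\}$ is a form symmetry, $Y_n$ solves (\ref{rgmy}), whose first-coordinate component is exactly (\ref{tdf}) by part (c); inverting the defining relation for $t_{n+1}$ in $G$ yields $x_{n+1}=t_{n+1}\ast h_{n+1}(x_n,\ldots,x_{n-k+m})^{-1}$, which is (\ref{tdcf}). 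Conversely, a solution $\{t_n\}$ of (\ref{tdf}) together with $k+1-m$ initial values for $x$ generates via (\ref{tdcf}) a sequence whose associated $t$-values reproduce $\{t_n\}$, so the steps reverse; this rests only on the substitution $x_n\mapsto t_n$ being a bijection for fixed past values, again a consequence of the group structure. The order count $m+(k+1-m)=k+1$ follows from the number of arguments in $\phi_n$ and in (\ref{tdcf}). The main obstacle is the index bookkeeping in (c): one must check that the subscripts $h_{n-j}$ in (\ref{tdfs}) are chosen precisely so that applying $F_n$ and then $H_{n+1}$ regenerates the lower coordinates of $H_n$ with the correct shift, and this alignment of the explicit $n$-dependence of $h$ with the temporal shift of the unfolding is exactly where the time-independent argument of \cite{hsarx} must be adapted.
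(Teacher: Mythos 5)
Your proposal is correct and follows essentially the same route as the paper: backward substitution from a fixed tail $u_m=\cdots=u_k=\bar u$ for surjectivity, matching the lower coordinates of $H_{n+1}\circ F_n$ against those of $H_n$ (using part (a) to pass from the image to all of $G^m$) for the scalar-type claim, and identifying $t_n$ with the first coordinate of $H_n(X_n)$ together with the invertibility of $x_n\mapsto x_n\ast h_n(\cdot)$ for the equivalence in (b). The only difference is organizational --- you prove (c) before (b) and invoke it there, while the paper expands the semiconjugate relation directly inside (b) and leaves the converse computation more explicit --- but the underlying calculations coincide.
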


\begin{proof}
(a) Let $n$ be a fixed non-negative integer and for $j=0,\ldots,m-1$ denote
the $j$-th coordinate function of $H_{n}$ by
\begin{equation}
\eta\,_{j+1}(u_{0},\ldots,u_{k})=u_{j}\ast h_{n-j}(u_{j+1},\ldots,u_{j+k+1-m})
\label{etaj}%
\end{equation}

Now choose an arbitrary point $(v_{1},\ldots,v_{m})\in G^{m}$ and define
\begin{align}
u_{m-1}  &  =v_{m}\ast h_{n-m+1}(u_{m},u_{m+1}\ldots,u_{k})^{-1},\label{umv}\\
u_{m}  &  =u_{m+1}=\ldots u_{k}=\bar{u}\nonumber
\end{align}

where $\bar{u}$ is a fixed element of $G,$ e.g., the identity.\ Then
\begin{align*}
v_{m}  &  =u_{m-1}\ast h_{n-m+1}(\bar{u},\bar{u}\ldots,\bar{u})\\
&  =u_{m-1}\ast h_{n-m+1}(u_{m},u_{m+1}\ldots,u_{k})\\
&  =\eta_{m}(u_{0},\ldots,u_{k})\\
&  =\eta_{m}(u_{0},\ldots,u_{m-2},\underset{u_{m-1}}{\underbrace{v_{m}\ast
h_{n-m+1}(\bar{u},\bar{u}\ldots,\bar{u})^{-1}}},\bar{u}\ldots,\bar{u}).
\end{align*}
for any selection of elements $u_{0},\ldots,u_{m-2}\in G.$ Using the same
idea, define
\[
u_{m-2}=v_{m-1}\ast h_{n-m+2}(u_{m-1},\bar{u}\ldots,\bar{u})^{-1}%
\]
with $u_{m-1}$ defined by (\ref{umv}) so as to get
\begin{align*}
v_{m-1}  &  =u_{m-2}\ast h_{n-m+2}(u_{m-1},\bar{u}\ldots,\bar{u})\\
&  =u_{m-2}\ast h_{n-m+2}(u_{m-1},u_{m}\ldots,u_{k-1})\\
&  =\eta_{m-1}(u_{0},\ldots,u_{k})\\
&  =\eta_{m-1}(u_{0},\ldots,u_{m-3},\underset{u_{m-2}}{\underbrace{v_{m-1}\ast
h_{n-m+2}(u_{m-1},\bar{u}\ldots,\bar{u})^{-1}}},u_{m-1},\bar{u}\ldots,\bar{u})
\end{align*}
for any choice of $u_{0},\ldots,u_{m-3}\in G.$ Continuing in this way, by
induction we obtain elements $u_{m-1},\ldots,u_{0}\in G$ such that%

\[
v_{i}=\eta_{i}(u_{0},\ldots,u_{m-1},\bar{u}\ldots,\bar{u}),\quad
i=1,\ldots,m.
\]
Therefore, $H_{n}(u_{0},\ldots,u_{m-1},\bar{u}\ldots,\bar{u})=(v_{1}%
,\ldots,v_{m})$ and it follows that $H_{n}$ is onto $G^{m}.$

(b) To show that the SC factorization system consisting of equations
(\ref{tdf}) and (\ref{tdcf}) is equivalent to Eq.(\ref{dek}) we show that: (i)
each solution $\{x_{n}\}$ of (\ref{dek}) uniquely generates a solution of the
system (\ref{tdf}) and (\ref{tdcf}) and conversely (ii) each solution
$\{(t_{n},y_{n})\}$ of the system (\ref{tdf}) and (\ref{tdcf}) correseponds
uniquely to a solution $\{x_{n}\}$ of (\ref{dek}). To establish (i) let
$\{x_{n}\}$ be the unique solution of (\ref{dek}) corresponding to a given set
of initial values $x_{0},\ldots x_{-k}\in G.$ Define the sequence
\begin{equation}
t_{n}=x_{n}\ast h_{n}(x_{n-1},\ldots,x_{n-k+m-1}) \label{tntd}%
\end{equation}

\noindent for $n\geq-m+1.$ Then for each $n\geq0$ if $H_{n}$ is defined by
(\ref{tdfs}) it follows from the semiconjugate relation (\ref{tdscr}) that%
\begin{align*}
x_{n+1}  &  =f_{n}(x_{n},\ldots,x_{n-k})\\
&  =\phi_{n}(x_{n}\ast h_{n}(x_{n-1},\ldots,x_{n-k+m-1}),\ldots,\\
&  \qquad x_{n-m+1}\ast h_{n-m+1}(x_{n-m},\ldots,x_{n-k}))\ast\lbrack
h_{n+1}(x_{n},\ldots,x_{n-k+m})]^{-1}\\
&  =\phi_{n}(t_{n},\ldots,t_{n-m+1})\ast\lbrack h_{n+1}(x_{n},\ldots
,x_{n-k+m})]^{-1}%
\end{align*}

Therefore, $\phi_{n}(t_{n},\ldots,t_{n-m+1})=x_{n+1}\ast h_{n+1}(x_{n}%
,\ldots,x_{n-k+m})=t_{n+1}$ so that $\{t_{n}\}$ is the unique solution of the
factor equation (\ref{tdf}) with initial values%
\[
t_{-j}=x_{-j}\ast h_{-j}(x_{-j-1},\ldots,x_{-j-k+m-1}),\quad j=0,\ldots,m-1.
\]

Further, since $x_{n+1}=t_{n+1}\ast\lbrack h_{n+1}(x_{n},\ldots,x_{n-k+m}%
)]^{-1}$ for $n\geq0$ by (\ref{tntd}), $\{x_{n}\}$ is the unique solution of
the cofactor equation (\ref{tdcf}) with initial values $y_{-i}=x_{-i}$ for
$i=0,1,\ldots,k-m$ and with the values $t_{n}$ obtained above.

To establish (ii) let $\{(t_{n},y_{n})\}$ be a solution of the factor-cofactor
system with initial values
\[
t_{0},\ldots,t_{-m+1},y_{-m},\ldots y_{-k}\in G.
\]

Note that these numbers determine $y_{-m+1},\ldots,y_{0}$ through the cofactor
equation%
\begin{equation}
y_{-j}=t_{-j}\ast\lbrack h_{-j}(y_{-j-1},\ldots,y_{-j-1-k+m})]^{-1},\quad
j=0,\ldots,m-1. \label{y-ntd}%
\end{equation}

Now for $n\geq0$ we obtain%
\begin{align*}
y_{n+1}  &  =t_{n+1}\ast\lbrack h_{n+1}(y_{n},\ldots,y_{n-k+m})]^{-1}\\
&  =\phi_{n}(t_{n},\ldots,t_{n-m+1})\ast\lbrack h_{n+1}(y_{n},\ldots
,y_{n-k+m})]^{-1}\\
&  =\phi_{n}(y_{n}\ast h_{n}(y_{n-1},\ldots,y_{n-k+m-1}),\ldots,\\
&  \qquad y_{n-m+1}\ast h_{n-m+1}(y_{n-m},\ldots,y_{n-k}))\ast h_{n+1}%
(y_{n},\ldots,y_{n-k+m})^{-1}\\
&  =f_{n}(y_{n},\ldots,y_{n-k})
\end{align*}

Thus $\{y_{n}\}$ is the unique solution of Eq.(\ref{dek}) that is generated by
the initial values (\ref{y-ntd}) and $y_{-m},\ldots y_{-k}.$ This completes
the proof of (b).

(c) We show that each coordinate function $\phi_{j,n}$ is the projection into
coordinate $j-1$ for $j>1.$ From the definition of $H_{n}$ in (\ref{tdfs}) and
the semiconjugate relation (\ref{tdscr}) we infer that%
\begin{align*}
H_{n+1}(F_{n}(u_{0},\ldots,u_{k}))  &  =H_{n+1}(f_{n}(u_{0},\ldots
,u_{k}),u_{0},\ldots,u_{k-1})\\
&  =(f_{n}(u_{0},\ldots,u_{k})\ast h_{n+1}(u_{0},\ldots,u_{k-m}),\\
&  u_{0}\ast h_{n}(u_{1},\ldots,u_{k-m+1}),\ldots,\\
&  u_{m-2}\ast h_{n-m+2}(u_{m-1},\ldots,u_{k-1})).
\end{align*}

Matching the corresponding component functions in the above equality for
$j\geq2$ yields%
\begin{gather*}
\phi_{j,n}(u_{0}\ast h_{n}(u_{1},\ldots,u_{k+1-m}),\ldots,u_{m-1}\ast
h_{n-m+1}(u_{m},\ldots,u_{k}))=\\
u_{j-2}\ast h(u_{j-1},u_{j}\ldots,u_{j+k-m-1})
\end{gather*}
which shows that $\phi_{j,n}$ maps its $j$-th coordinate to its $(j-1)$-st.
Therefore, for each $n$ and every $(t_{1},\ldots,t_{m})\in H_{n}(G^{k+1})$ we
have%
\[
\Phi_{n}(t_{1},\ldots,t_{m})=[\phi_{n}(t_{1},\ldots,t_{m}),t_{1}%
,\ldots,t_{m-1}]
\]
i.e., $\Phi_{n}|_{H_{n}(G^{k+1})}$ is of scalar type. Since by Part (a)
$H_{n}(G^{k+1})=G^{m}$ for every $n,$ it follows that $\Phi_{n}$ is of scalar type.
\end{proof}

The pair of equations (\ref{tdf}) and (\ref{tdcf}) in Theorem \ref{tdfsthm} is
uncoupled in the sense that (\ref{tdf}) is independent of (\ref{tdcf}). Such a
pair forms a triangular system as defined in \cite{al} and \cite{js}. In the
next definition we use convenient and suggestive terminology to describe these equations.

\begin{definition}
Eq.(\ref{tdf}) is a \textbf{factor} of Eq.(\ref{dek}) since it is derived from
the semiconjugate factor $\Phi_{n}.$ Eq.(\ref{tdcf}) that links the factor to
the original equation is a \textbf{cofactor} of Eq.(\ref{dek}). We refer to
the system of equations (\ref{tdf}) and (\ref{tdcf}) as a
\textbf{semiconjugate (SC) factorization} of Eq.(\ref{dek}). Note that
\textit{orders }$m$\textit{\ and }$k+1-m$\textit{\ of (\ref{tdf}) and
(\ref{tdcf}) respectively, add up to the order of (\ref{dek}). We refer to the
system of equations (\ref{tdf}) and (\ref{tdcf}) as a \textbf{type-}%
(}$m,k+1-m$) \textbf{order reduction} of Eq.\textit{(\ref{dek}).}
\end{definition}

\section{Invertible-map criterion}

In \cite{hsinvcrt} and \cite{hsarx} a useful necessary and sufficient
condition is obtained by which to determine whether the difference equation
(\ref{dek}) has order-reducing form symmetries (not time-dependent). In this
section we show that the same useful idea extends to the time-dependent
context. Applications and examples are discussed in the next section.

Consider the following special case of (\ref{tdfs}) with $m=k$
\begin{equation}
H_{n}(u_{0},u_{1},\ldots,u_{k})=[u_{0}\ast h_{n}(u_{1}),u_{1}\ast
h_{n-1}(u_{2}),\ldots,u_{k-1}\ast h_{n-k+1}(u_{k})] \label{tdfsk1}%
\end{equation}
with $h_{n}:G\rightarrow G$ being a sequence of surjective self-maps of the
underlying group $G$ for $n\geq-k+1$. If (\ref{dek}) has the form symmetry
(\ref{tdfsk1}) then it admits a type-$(k,1)$ order-reduction and its SC
factorization is%
\begin{align}
t_{n+1}  &  =\phi_{n}(t_{n},\ldots,t_{n-k+1}),\label{tdf1}\\
x_{n+1}  &  =t_{n+1}\ast h_{n+1}(x_{n})^{-1}. \label{tdfc1}%
\end{align}

The initial values of the factor equation (\ref{tdf1}) are%
\[
t_{-j}=x_{-j}\ast h_{-j}(x_{-j+1}),\quad j=0,1,\ldots,k-1
\]

\begin{theorem}
\label{tdhinv}(Time-dependent invertible map criterion) Assume that
$h_{n}:G\rightarrow G$ is a sequence of bijections of $G$ for $n\geq-k+1$. For
arbitrary elements $u_{0},v_{1},\ldots,v_{k}\in G$ and every $n\geq0$ define
$\zeta_{0,n}(u_{0})\equiv u_{0}$ and for $j=1,\ldots,k,$
\begin{equation}
\zeta_{j,n}(u_{0},v_{1},\ldots,v_{j})=h_{n-j+1}^{-1}(\zeta_{j-1,n}(u_{0}%
,v_{1},\ldots,v_{j-1})^{-1}\ast v_{j}). \label{zetajn}%
\end{equation}
with the usual distinction observed between map inversion and group inversion.
Then Eq.(\ref{dek}) has the form symmetry $\{H_{n}\}$ defined by
(\ref{tdfsk1}) if and only if the quantity
\begin{equation}
f_{n}(\zeta_{0,n},\zeta_{1,n}(u_{0},v_{1}),\ldots,\zeta_{k,n}(u_{0}%
,v_{1},\ldots,v_{k}))\ast h_{n+1}(u_{0}) \label{tdhinvcrit}%
\end{equation}
is independent of $u_{0}$ for every $n\geq0$.

In this case Eq.(\ref{dek}) has a SC factorization whose factor functions in
(\ref{tdf1}) are given by%
\begin{equation}
\phi_{n}(v_{1},\ldots,v_{k})=f_{n}(\zeta_{0,n},\zeta_{1,n}(u_{0},v_{1}%
),\ldots,\zeta_{k,n}(u_{0},v_{1},\ldots,v_{k}))\ast h_{n+1}(u_{0}).
\label{tdinvf}%
\end{equation}

\end{theorem}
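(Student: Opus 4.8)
The plan is to reduce the statement ``Eq.(\ref{dek}) has the form symmetry (\ref{tdfsk1})'' directly to the semiconjugate relation (\ref{tdscr}) specialized to $m=k$, and then to perform an invertible change of variables that isolates the dependence on $u_0$. By Definition \ref{tdfsdef} together with part (c) of Lemma \ref{tdfsthm}, possessing the form symmetry $\{H_n\}$ is equivalent to the existence of a scalar function $\phi_n$ for which the scalar-type map $\Phi_n$ satisfies $H_{n+1}\circ F_n=\Phi_n\circ H_n$.

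First I would expand both sides of (\ref{tdscr}) componentwise using (\ref{unfold}) and (\ref{tdfsk1}). Writing $H_n(u_0,\ldots,u_k)=(s_1,\ldots,s_k)$ with $s_i=u_{i-1}\ast h_{n-i+1}(u_i)$, a direct computation shows that components $2$ through $k$ of $H_{n+1}(F_n(u))$ and of $\Phi_n(H_n(u))$ coincide automatically --- both reduce to the shifted entries $s_1,\ldots,s_{k-1}$ --- precisely because $\Phi_n$ is of scalar type. Hence the entire semiconjugate relation collapses to the single first-component equation
\[
\phi_n(s_1,\ldots,s_k)=f_n(u_0,\ldots,u_k)\ast h_{n+1}(u_0).
\]

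The decisive step is the change of variables $v_i=s_i=u_{i-1}\ast h_{n-i+1}(u_i)$ for $i=1,\ldots,k$. Since each $h_n$ is a bijection, this can be inverted coordinate by coordinate: solving $v_i=u_{i-1}\ast h_{n-i+1}(u_i)$ gives $u_i=h_{n-i+1}^{-1}(u_{i-1}^{-1}\ast v_i)$, which is exactly the recursion (\ref{zetajn}). A short induction on $j$ then confirms that $u_j=\zeta_{j,n}(u_0,v_1,\ldots,v_j)$, so the map $(u_0,u_1,\ldots,u_k)\mapsto(u_0,v_1,\ldots,v_k)$ is a bijection of $G^{k+1}$. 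Substituting recasts the first-component equation as
\[
\phi_n(v_1,\ldots,v_k)=f_n(\zeta_{0,n},\zeta_{1,n}(u_0,v_1),\ldots,\zeta_{k,n}(u_0,v_1,\ldots,v_k))\ast h_{n+1}(u_0),
\]
valid for all $(u_0,v_1,\ldots,v_k)\in G^{k+1}$.

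Finally I would read off the criterion. The left-hand side does not involve $u_0$, so a well-defined $\phi_n$ satisfying this identity for every $u_0$ exists if and only if the right-hand side, namely the quantity (\ref{tdhinvcrit}), is independent of $u_0$; and when it is, fixing $u_0$ at any element (for instance the group identity) yields the formula (\ref{tdinvf}) for $\phi_n$. Because the change of variables is a bijection, the transformed equation holds for all arguments exactly when the original first-component equation does, which closes the equivalence. I expect the only delicate point to be the bookkeeping of the index shifts in (\ref{tdfsk1}) when expanding (\ref{tdscr}); once the automatic cancellation of components $2,\ldots,k$ is recognized, the remainder rests on the invertibility of the $\zeta$-substitution, which is routine given that the $h_n$ are bijections.
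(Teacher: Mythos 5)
Your proposal is correct and follows essentially the same route as the paper: both hinge on inverting the coordinates of $H_n$ via the recursion (\ref{zetajn}) so that the first component of the semiconjugate relation becomes $\phi_n(v_1,\ldots,v_k)=f_n(\zeta_{0,n},\ldots,\zeta_{k,n})\ast h_{n+1}(u_0)$, with the remaining components matching automatically by the scalar-type structure. The only difference is organizational --- you package the two implications as a single invertible change of variables, whereas the paper runs the two inductions (substituting $v_i=u_{i-1}\ast h_{n-i+1}(u_i)$ to recover $u_j$, and substituting $u_j=\zeta_{j,n}$ to recover $v_j$) separately for each direction.
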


\begin{proof}
Assume first that (\ref{tdhinvcrit}) is independent of $u_{0}$ for all
$v_{1},\ldots,v_{k}$ so that the functions%
\begin{equation}
\phi_{n}(v_{1},\ldots,v_{k})=f_{n}(\zeta_{0,n},\zeta_{1,n},\ldots,\zeta
_{k,n})\ast h_{n+1}(u_{0}) \label{tdgn}%
\end{equation}
are well defined. Next, if $H_{n}$ is given by (\ref{tdfsk1}) then for all
$u_{0},u_{1},\ldots,u_{k}$
\[
\phi_{n}(H_{n}(u_{0},u_{1},\ldots,u_{k}))=\phi_{n}(u_{0}\ast h_{n}%
(u_{1}),u_{1}\ast h_{n-1}(u_{2}),\ldots,u_{k-1}\ast h_{n-k+1}(u_{k})).
\]

Now, by (\ref{zetajn}) for each $n$ and all $u_{0},u_{1}$
\[
\zeta_{1,n}(u_{0},u_{0}\ast h_{n}(u_{1}))=h_{n}^{-1}(u_{0}^{-1}\ast u_{0}\ast
h_{n}(u_{1}))=u_{1}.
\]

Similarly, for each $n$ and all $u_{0},u_{1},u_{2}$
\begin{align*}
\zeta_{2,n}(u_{0},u_{0}\ast h_{n}(u_{1}),u_{1}\ast h_{n-1}(u_{2}))  &
=h_{n-1}^{-1}(\zeta_{1,n}(u_{0},u_{0}\ast h_{n}(u_{1}))^{-1}\ast\\
&  \qquad\qquad u_{1}\ast h_{n-1}(u_{2}))\\
&  =u_{2}.
\end{align*}

Suppose by way of induction that%
\[
\zeta_{l,n}(u_{0}\ast h_{n}(u_{1}),\ldots,u_{l-1}\ast h_{n-l+1}(u_{k}))=u_{l}%
\]
for $1\leq l<j$. Then
\[
\zeta_{j,n}(u_{0}\ast h_{n}(u_{1}),\ldots,u_{j-1}\ast h_{n-j+1}(u_{j}%
))=h_{n-j+1}^{-1}(u_{j-1}^{-1}\ast u_{j-1}\ast h_{n-j+1}(u_{j}))=u_{j}.
\]

Thus by (\ref{tdgn})
\[
\phi_{n}(H_{n}(u_{0},u_{1},\ldots,u_{k}))=f_{n}(u_{0},\ldots,u_{k})\ast
h_{n+1}(u_{0})
\]

Now if $F_{n}$ and $\Phi_{n}$ are the unfoldings of $f_{n}$ and $\phi_{n}$
respectively, then%

\begin{align*}
H_{n+1}(F_{n}(u_{0},\ldots,u_{k}))  &  =[f_{n}(u_{0},\ldots,u_{k})\ast
h_{n+1}(u_{0}),u_{0}\ast h_{n}(u_{1}),\\
&  \qquad\ldots,u_{k-2}\ast h_{n-k+2}(u_{k-1})]\\
&  =[\phi_{n}(H_{n}(u_{0},u_{1},\ldots,u_{k})),u_{0}\ast h_{n}(u_{1}),\\
&  \qquad\ldots,u_{k-2}\ast h_{n-k+2}(u_{k-1})]\\
&  =\Phi_{n}(H_{n}(u_{0},\ldots,u_{k}))
\end{align*}
and it follows that $\{H_{n}\}$ is a semiconjugate form symmetry for
Eq.(\ref{dek}). The existence of a SC factorization with factor functions
defined by (\ref{tdinvf}) now follows from Lemma \ref{tdfsthm}.

Conversely, if $\{H_{n}\}$ as given by (\ref{tdfsk1}) is a time-dependent form
symmetry of Eq.(\ref{dek}) then the semiconjugate relation implies that for
arbitrary $u_{0},\ldots,u_{k}$ in $G$ there are functions $\phi_{n}$ such that%
\begin{equation}
f_{n}(u_{0},\ldots,u_{k})\ast h_{n+1}(u_{0})=\phi_{n}(u_{0}\ast h_{n}%
(u_{1}),\ldots,u_{k-1}\ast h_{n-k+1}(u_{k})). \label{tdgn1}%
\end{equation}

For every $u_{0},v_{1},\ldots,v_{k}$ in $G$ and with functions $\zeta_{j,n} $
as defined above, note that%
\[
\zeta_{j-1,n}(u_{0},v_{1},\ldots,v_{j-1})\ast h_{n-j+1}(\zeta_{j,n}%
(u_{0},v_{1},\ldots,v_{j}))=v_{j},\quad j=1,2,\ldots,k.
\]

Therefore, abbreviating $\zeta_{j,n}(u_{0},v_{1},\ldots,v_{j})$ by
$\zeta_{j,n}$ we have
\begin{align*}
f_{n}(\zeta_{0,n},\zeta_{1,n},\ldots,\zeta_{k,n})\ast h_{n+1}(u_{0})  &
=\phi_{n}(\zeta_{0,n}\ast h_{n}(\zeta_{1,n}),\zeta_{1,n}\ast h_{n-1}%
(\zeta_{2,n}),\\
&  \quad\ldots,\zeta_{k-1,n}\ast h_{n-k+1}(\zeta_{k,n}))\\
&  =\phi_{n}(v_{1},\ldots,v_{k})
\end{align*}
which is independent of $u_{0}.$
\end{proof}

Recall that an algebraic field $\mathcal{F=}(\mathcal{F},+,\cdot)$ is, in
particular, a commutative group with respect to addition. Further, its set of
nonzero elements $\mathcal{F}\backslash\{0\}$ is a commutative group under
multiplication. A simple yet important type of form symmetry may be defined on
a field.

\begin{definition}
Let $\mathcal{F}$ be a non-trivial field and $\{\alpha_{n}\}$ a sequence of
elements of $\mathcal{F}$ such that $\alpha_{n}\in\mathcal{F}\backslash\{0\}$
for all $n\geq-k+1.$ A (\textit{time-dependent) }\textbf{linear form symmetry}
is defined as the following special case of (\ref{tdfsk1}) with $h_{n}%
(u)=-\alpha_{n-1}u$%
\begin{equation}
\lbrack u_{0}-\alpha_{n-1}u_{1},u_{1}-\alpha_{n-2}u_{2},\ldots,u_{k-1}%
-\alpha_{n-k}u_{k}]. \label{tdfsl}%
\end{equation}
The sequence $\{\alpha_{n}\}$ of nonzero elements in $\mathcal{F}$ may be
called the \textbf{eigensequence} of the linear form symmetry. If
Eq.(\ref{dek}) has a linear form symmetry then $\{\alpha_{n}\}$ is an
eigensequence of (\ref{dek}).
\end{definition}

The use of the term \textquotedblleft eigen\textquotedblright\ which is
borrowed from the theory of linear equations is apt here for two reasons.
First, the sequence $\{\alpha_{n}\}$ characterizes the linear form symmetry
(\ref{tdfsl}) completely and secondly, we find below that linear difference
equations indeed have linear form symmetries. 

The existence of a linear form symmetry implies a type-($k,1$) order reduction
for Eq.(\ref{dek}) and a SC factorization where the cofactor equation
(\ref{tdfc1}) is determined more specifically as%
\begin{equation}
x_{n+1}=t_{n+1}+\alpha_{n}x_{n}. \label{tdfslc}%
\end{equation}

The following necessary and sufficient condition for the existence of a
time-dependent linear form symmetry is an application of Theorem \ref{tdhinv}.
We drop further mention of \textquotedblleft type-($k,1$)\textquotedblright%
\ as we do not discuss any other order reduction types in the remainder of
this paper.

\begin{corollary}
\label{tdlfs} Equation (\ref{dek}) has a time-dependent linear form symmetry
of type (\ref{tdfsl}) with an eigensequence $\{\alpha_{n}\}$ in a non-trivial
field $\mathcal{F}$ if and only if the quantity%
\begin{equation}
f_{n}(u_{0},\zeta_{1,n}(u_{0},v_{1}),\ldots,\zeta_{k,n}(u_{0},v_{1}%
,\ldots,v_{k}))-\alpha_{n}u_{0} \label{tdlfscr}%
\end{equation}
is independent of $u_{0}$ for all $n\geq0$ with the functions $\zeta_{j,n}$
for $j=1,\ldots,k$ given by
\begin{align*}
\zeta_{j,n}(u_{0},v_{1},\ldots,v_{j})  &  =\frac{\zeta_{j-1,n}(u_{0}%
,v_{1},\ldots,v_{j-1})-v_{j}}{\alpha_{n-j}}\\
&  =\frac{1}{\prod_{i=1}^{j}\alpha_{n-i}}\left(  u_{0}-\sum_{i=1}^{j}v_{i}%
{\displaystyle\prod_{p=1}^{i}}
\alpha_{n-p}\right)  .
\end{align*}

\end{corollary}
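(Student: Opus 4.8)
The plan is to deduce this corollary directly from Theorem \ref{tdhinv} by specializing the underlying group to the additive group $(\mathcal{F},+)$ and taking $h_n(u)=-\alpha_{n-1}u$, so that the general form symmetry (\ref{tdfsk1}) becomes the linear form symmetry (\ref{tdfsl}). Before the theorem can be applied I would first check its standing hypothesis that each $h_n$ is a bijection of $\mathcal{F}$: since $\alpha_{n-1}\neq 0$ lies in the multiplicative group $\mathcal{F}\setminus\{0\}$, the map $u\mapsto-\alpha_{n-1}u$ is an additive automorphism of $\mathcal{F}$ with inverse $w\mapsto-\alpha_{n-1}^{-1}w$. This is the one place where the field structure (rather than an arbitrary group) is genuinely used, because invertibility of $h_n$ rests on invertibility of $\alpha_{n-1}$ under multiplication.

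Next I would translate the two group-theoretic ingredients of Theorem \ref{tdhinv} into field arithmetic. In (\ref{zetajn}) the operation $\ast$ is addition and the group inverse of $\zeta_{j-1,n}$ is its negative, so the argument of $h_{n-j+1}^{-1}$ is $v_j-\zeta_{j-1,n}$; since $h_{n-j+1}(u)=-\alpha_{n-j}u$ has inverse $w\mapsto-\alpha_{n-j}^{-1}w$, this yields at once the one-step recursion $\zeta_{j,n}=(\zeta_{j-1,n}-v_j)/\alpha_{n-j}$ stated in the corollary. The closed form is then obtained by an induction on $j$: feeding the inductive expression for $\zeta_{j-1,n}$ into this recursion, dividing through by $\alpha_{n-j}$, and collecting the $v_i$ terms reproduces the product formula, the coefficient of $v_j$ picking up exactly the partial product of the $\alpha_{n-p}$ that accumulates at stage $j$.

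Finally, the criterion itself specializes immediately: with $\ast=+$ and $h_{n+1}(u_0)=-\alpha_n u_0$, the quantity (\ref{tdhinvcrit}) of Theorem \ref{tdhinv} becomes $f_n(\zeta_{0,n},\ldots,\zeta_{k,n})-\alpha_n u_0$, which is precisely (\ref{tdlfscr}). Theorem \ref{tdhinv} then gives the equivalence directly: the linear form symmetry (\ref{tdfsl}) exists if and only if this expression is independent of $u_0$ for every $n\geq 0$, with the factor functions $\phi_n$ read off from (\ref{tdinvf}). I expect no real obstacle here, since the argument is a faithful translation of an already-proved theorem; the only points demanding care are keeping map-inversion distinct from the field negation when simplifying (\ref{zetajn}), and checking that the empty product and empty sum conventions make the induction start correctly at $\zeta_{0,n}=u_0$.
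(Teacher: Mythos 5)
Your proposal is correct and follows exactly the paper's route: the paper's proof likewise consists of invoking Theorem \ref{tdhinv} with $h_{n}(u)=-\alpha_{n-1}u$ on the additive group of $\mathcal{F}$ and obtaining the closed form for $\zeta_{j,n}$ by the routine induction you describe (your explicit check that each $h_n$ is a bijection because $\alpha_{n-1}\neq 0$ is the only detail the paper leaves tacit). One small point to watch when you actually carry out that induction: it produces $v_i\prod_{p=1}^{i-1}\alpha_{n-p}$ in the numerator (consistent with the worked value $\zeta_{2,n}=(u_0-v_1-\alpha_{n-1}v_2)/(\alpha_{n-1}\alpha_{n-2})$ used later in the paper), so the upper limit $i$ in the displayed product of the corollary is an off-by-one slip rather than something your argument should reproduce.
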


\begin{proof}
The conclusions follow immediately from Theorem \ref{tdhinv} using
$h_{n}(u)=-\alpha_{n-1}u.$ The last equality above is established from the
equality preceding it by routine calculation.
\end{proof}

\begin{remark}
If Eq.(\ref{dek}) has a linear form symmetry then by Corollary \ref{tdlfs} an
eigensequence of (\ref{dek}) can be defined equivalently as a sequence
$\{\alpha_{n}\}$ in $\mathcal{F}\backslash\{0\}$ for which the quantity in
(\ref{tdlfscr}) is independent of $u_{0}$ for every $n\geq0.$
\end{remark}

We close this section with an example of a nonlinear equation that has a linear
form symmetry. For additional results and examples, we refer to \cite{fsorbk}.

\begin{example}
Consider the following third-order nonlinear difference equation%
\begin{equation}
x_{n+1}=(-1)^{n+1}x_{n}-2x_{n-1}+g_{n}(x_{n}+x_{n-2}) \label{svag}%
\end{equation}
where $g_{n}:\mathbb{R}\rightarrow\mathbb{R}$ is a given function for each
$n\geq-2$. By Corollary \ref{tdlfs} a linear form symmetry for (\ref{svag})
exists if and only if the quantity
\begin{equation}
(-1)^{n+1}u_{0}-2\zeta_{1,n}+g_{n}(u_{0}+\zeta_{2,n})-\alpha_{n}u_{0}
\label{vagq}%
\end{equation}
is independent of $u_{0}$ for all $n$. Substituting
\[
\zeta_{1,n}=\frac{u_{0}-v_{1}}{\alpha_{n-1}},\quad\zeta_{2,n}=\frac
{\zeta_{1,n}-v_{2}}{\alpha_{n-2}}=\frac{u_{0}-v_{1}-\alpha_{n-1}v_{2}}%
{\alpha_{n-1}\alpha_{n-2}}.
\]
in (\ref{vagq}) and rearranging terms gives%
\begin{align*}
&  \left(  (-1)^{n+1}-\frac{2}{\alpha_{n-1}}-\alpha_{n}\right)  u_{0}+\frac
{2}{\alpha_{n-1}}v_{1}+\\
&  \qquad g_{n}\left(  \left(  1+\frac{1}{\alpha_{n-1}\alpha_{n-2}}\right)
u_{0}-\frac{1}{\alpha_{n-1}\alpha_{n-2}}v_{1}-\frac{1}{\alpha_{n-2}}%
v_{2}\right)
\end{align*}
which is independent of $u_{0}$ for all $n$ if the coefficients of the $u_{0}$
terms are zeros; i.e., for all $n$, the numbers $\alpha_{n}$ satisfy both of
the following equations
\begin{align}
\alpha_{n}  &  =(-1)^{n+1}-\frac{2}{\alpha_{n-1}}\label{vagr1s}\\
\alpha_{n-1}  &  =-\frac{1}{\alpha_{n-2}}. \label{vagr2s}%
\end{align}
Every solution of Eq.(\ref{vagr2s}) is a sequence of period 2%
\begin{equation}
\left\{  q,-\frac{1}{q},q,-\frac{1}{q},\ldots\right\}  \label{1oq}%
\end{equation}
where $q=\alpha_{-2}\in\mathbb{R}.$ Now (\ref{vagr2s}) yields $\alpha
_{-1}=-1/q$, which we substitute as an initial value in Eq.(\ref{vagr1s}) to
get%
\[
\alpha_{0}=-1+2q.
\]
Now to make the period-two sequence in (\ref{1oq}) also a solution of
(\ref{vagr1s}), we require the above value of $\alpha_{0}$ to be equal to $q;$
thus%
\[
\alpha_{0}=q\Rightarrow2q-1=q\Rightarrow q=1.
\]
We check that if $\alpha_{0}=q=1$ in (\ref{vagr1s}) then%
\[
\alpha_{1}=1-\frac{2}{q}=-1=-\frac{1}{q},\ \alpha_{2}=-1-\frac{2}%
{-1}=1=q,\ \text{etc}%
\]
so that both of the equations (\ref{vagr1s}) and (\ref{vagr2s}) generate the
same sequence $\{\alpha_{n}\}$ where $\alpha_{n}=(-1)^{n}$ for $n\geq-2.$ It
follows that $\{(-1)^{n}\}$ is an eigensequence for (\ref{svag}).
\end{example}

\section{Factorization of linear equations}

We expect that linear difference equations are among difference equations that
have the linear form symmetry and this is indeed the case. The following
application of Corollary \ref{tdlfs} and Theorem \ref{tdhinv} gives the
semiconjugate factorization for non-autonomous and non-homogeneous linear
difference equations.

\begin{corollary}
\label{glinsc}(The general linear equation) Let $\{a_{i,n}\}$, $i=1,\ldots,k$
and $\{b_{n}\}$ be given sequences in a non-trivial field $\mathcal{F}$ such
that $a_{k,n}\not =0$ for all $n\geq0.$ The non-homogeneous linear equation of
order $k+1$%
\begin{equation}
x_{n+1}=a_{0,n}x_{n}+a_{1,n}x_{n-1}+\cdots+a_{k,n}x_{n-k}+b_{n} \label{genlin}%
\end{equation}

has a linear form symmetry with eigensequence $\{\alpha_{n}\}$ for every
solution $\{\alpha_{n}\}$ in $\mathcal{F}$ of the following Riccati equation
of order $k$%
\begin{equation}
\alpha_{n}=a_{0,n}+\frac{a_{1,n}}{\alpha_{n-1}}+\frac{a_{2,n}}{\alpha
_{n-1}\alpha_{n-2}}+\cdots+\frac{a_{k,n}}{\alpha_{n-1}\cdots\alpha_{n-k}}
\label{ricn}%
\end{equation}

The corresponding SC factorization of (\ref{genlin}) is%
\begin{align}
t_{n+1}  &  =b_{n}-\sum_{i=1}^{k}\sum_{j=i}^{k}\frac{a_{j,n}}{\alpha
_{n-i}\cdots\alpha_{n-j}}t_{n-i+1}\label{genlinf}\\
x_{n+1}  &  =\alpha_{n}x_{n}+t_{n+1} \label{genlincf}%
\end{align}

\end{corollary}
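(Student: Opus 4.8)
The plan is to treat this as a direct application of Corollary \ref{tdlfs}, since (\ref{genlin}) is exactly the case $f_n(u_0,\ldots,u_k)=a_{0,n}u_0+a_{1,n}u_1+\cdots+a_{k,n}u_k+b_n$. The whole argument hinges on one structural observation: because $f_n$ is affine in its arguments and each $\zeta_{j,n}$ is affine in $(u_0,v_1,\ldots,v_k)$, the criterion quantity (\ref{tdlfscr}) is itself affine in $u_0$. Hence it is independent of $u_0$ precisely when the coefficient of $u_0$ vanishes, and the residual $u_0$-free part is then the factor function $\phi_n$. This lets me extract both conclusions of the corollary from a single substitution.

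First I would substitute the closed form of $\zeta_{j,n}$ into $f_n(u_0,\zeta_{1,n},\ldots,\zeta_{k,n})-\alpha_n u_0$ and read off the coefficient of $u_0$. The term $a_{0,n}u_0$ contributes $a_{0,n}$; each $a_{j,n}\zeta_{j,n}$ contributes $a_{j,n}/(\alpha_{n-1}\cdots\alpha_{n-j})$, which is $a_{j,n}$ times the $u_0$-coefficient of $\zeta_{j,n}$; and the subtracted term contributes $-\alpha_n$. Setting the sum to zero gives
\[
\alpha_n=a_{0,n}+\frac{a_{1,n}}{\alpha_{n-1}}+\cdots+\frac{a_{k,n}}{\alpha_{n-1}\cdots\alpha_{n-k}},
\]
which is exactly the Riccati equation (\ref{ricn}). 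By the equivalence in Corollary \ref{tdlfs}, a sequence $\{\alpha_n\}$ in $\mathcal{F}\backslash\{0\}$ is an eigensequence of (\ref{genlin}) if and only if it solves (\ref{ricn}), establishing the first assertion.

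Next I would collect the $u_0$-free remainder to identify $\phi_n$. The $v$-part of each $\zeta_{j,n}$ has, for $i\le j$, coefficient $-1/(\alpha_{n-i}\cdots\alpha_{n-j})$ on $v_i$ (this telescoping pattern is the one already visible in the $\zeta_{2,n}$ computation preceding the corollary). Multiplying by $a_{j,n}$, adding back $b_n$, and interchanging the order of summation so that the outer index $i$ runs from $1$ to $k$ and the inner index $j$ from $i$ to $k$ yields
\[
\phi_n(v_1,\ldots,v_k)=b_n-\sum_{i=1}^{k}\sum_{j=i}^{k}\frac{a_{j,n}}{\alpha_{n-i}\cdots\alpha_{n-j}}\,v_i.
\]
Evaluating along a solution, where $v_i=t_{n-i+1}$, gives the factor equation (\ref{genlinf}). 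The cofactor (\ref{genlincf}) is then immediate, since a linear form symmetry always carries the cofactor (\ref{tdfslc}), namely $x_{n+1}=t_{n+1}+\alpha_n x_n$.

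I expect the only delicate point to be the index bookkeeping in the double sum: verifying that the $v_i$-coefficient of $\zeta_{j,n}$ telescopes to $1/(\alpha_{n-i}\cdots\alpha_{n-j})$ and that the summation interchange is taken over the correct triangular index set $\{(i,j):1\le i\le j\le k\}$. Everything else is routine field arithmetic over $\mathcal{F}$; the standing hypotheses ($\alpha_{n-i}\neq 0$, implicit in a solution of (\ref{ricn}), and $a_{k,n}\neq 0$, which keeps (\ref{genlin}) genuinely of order $k+1$) serve only to guarantee that the maps $\zeta_{j,n}$ and the recursion (\ref{ricn}) are well defined.
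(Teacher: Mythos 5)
Your proposal is correct and follows essentially the same route as the paper's proof: apply Corollary \ref{tdlfs} to the affine map $f_n$, substitute the $\zeta_{j,n}$, set the $u_0$-coefficient to zero to obtain the Riccati equation (\ref{ricn}), read off the $u_0$-free remainder as $\phi_n$ via (\ref{tdinvf}), and interchange the triangular double sum to get (\ref{genlinf}), with the cofactor coming from (\ref{tdfslc}). Your index bookkeeping for the $v_i$-coefficient of $\zeta_{j,n}$, namely $-1/(\alpha_{n-i}\cdots\alpha_{n-j})$, matches the recursion and is in fact cleaner than the closed form displayed in Corollary \ref{tdlfs}.
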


\begin{proof}
By Corollary \ref{tdlfs} it is only necessary to determine a sequence
$\{\alpha_{n}\}$ of nonzero elements of $\mathcal{F}$ such that for each $n$
the quantity (\ref{tdlfscr}) is independent of $u_{0}$ for the following
function%
\[
f_{n}(u_{0},\ldots,u_{k})=a_{1,n}u_{0}+a_{2,n}u_{1}+\cdots+a_{k,n}u_{k}%
+b_{n}.
\]

For arbitrary $u_{0},v_{1},\ldots,v_{k}\in\mathcal{F}$ and $j=0,1,\ldots,k$
define $\zeta_{j,n}(u_{0},v_{1},\ldots,v_{j})$ as in Corollary \ref{tdlfs}.
Then the expression (\ref{tdlfscr}) is%
\begin{gather}
-\alpha_{n}u_{0}+b_{n}+a_{1,n}u_{0}+a_{2,n}\zeta_{1,n}(u_{0},v_{1}%
)+\cdots+a_{k,n}\zeta_{k,n}(u_{0},v_{1},\ldots,v_{k})=\nonumber\\
b_{n}+\left[  \sum_{j=1}^{k}\frac{a_{j,n}}{\prod_{i=1}^{j}\alpha_{n-i}}%
-\alpha_{n}\right]  u_{0}-\sum_{j=1}^{k}a_{j,n}\sum_{i=1}^{j}\frac{v_{i}%
}{\prod_{p=i}^{j}\alpha_{n-p}}\nonumber
\end{gather}

The above quantity is independent of $u_{0}$ if and only if the coefficient of
$u_{0}$ is zero for all $n;$ i.e., if $\{\alpha_{n}\}$ is a solution of the
Riccati difference equation%
\[
\alpha_{n}=\sum_{j=1}^{k}\frac{a_{j,n}}{\prod_{i=1}^{j}\alpha_{n-i}}%
\]
which is Eq.(\ref{ricn}). It follows that Eq.(\ref{genlin}) has a linear form
symmetry of type (\ref{tdfsl}) with eigensequence $\{\alpha_{n}\}$ for each
solution $\{\alpha_{n}\}$ of the Riccati equation. For the corresponding SC
factorization of (\ref{genlin}), the cofactor equation is simply
(\ref{tdfslc}) while the factor equation is obtained using the above
calculations and Eq.(\ref{tdinvf}) of Theorem \ref{tdhinv} as follows%
\begin{align*}
t_{n+1}  &  =b_{n}-\sum_{j=1}^{k}a_{j,n}\sum_{i=1}^{j}\frac{t_{n-i+1}}%
{\prod_{p=i}^{j}\alpha_{n-p}}\\
&  =b_{n}-\sum_{i=1}^{k}\sum_{j=i}^{k}\frac{a_{j,n}}{\alpha_{n-i}\cdots
\alpha_{n-j}}t_{n-i+1}.
\end{align*}
This completes the proof.
\end{proof}

Corollary \ref{glinsc} states that \textit{any} solution of the Riccati
equation (\ref{ricn}) gives a form symmetry and a SC factorization of
(\ref{genlin}) as specified above. The next example illustrates Corollary
\ref{glinsc}.

\begin{example}
Consider the second-order difference equation%
\begin{equation}
x_{n+1}=(-1)^{n+1}x_{n}+x_{n-1}+b_{n} \label{linexna}%
\end{equation}
where $b_{n},x_{0},x_{-1}$ are in a field $\mathcal{F}$ which we may take to
be any one of the familiar fields $\mathbb{Q}$, $\mathbb{R}$ or $\mathbb{C}.$
The associated Riccati equation of (\ref{linexna}) is%
\begin{equation}
\alpha_{n}=(-1)^{n+1}+\frac{1}{\alpha_{n-1}}. \label{linexnar}%
\end{equation}
Straightforward calculation shows that if $\alpha_{0}\not =0,-1$ then%
\begin{align*}
\alpha_{1}  &  =\frac{\alpha_{0}+1}{\alpha_{0}},\ \alpha_{2}=-\frac{1}%
{\alpha_{0}+1},\ \alpha_{3}=-\alpha_{0},\\
\alpha_{4}  &  =-\frac{\alpha_{0}+1}{\alpha_{0}},\ \alpha_{5}=\frac{1}%
{\alpha_{0}+1},\ \alpha_{6}=\alpha_{0}.
\end{align*}
It follows that all solutions of the Riccati equation (\ref{linexnar}) with
initial value outside the singularity set $\{0,-1\}$ are eigensequences in
$\mathcal{F}$ of period 6:%
\[
\left\{  \alpha_{0},\frac{\alpha_{0}+1}{\alpha_{0}},-\frac{1}{\alpha_{0}%
+1},-\alpha_{0},-\frac{\alpha_{0}+1}{\alpha_{0}},\frac{1}{\alpha_{0}+1}%
,\alpha_{0},\ldots\right\}
\]

The SC\ factorization of the linear equation (\ref{linexna}) is now obtained
by Corollary \ref{glinsc} as%
\begin{align*}
t_{n+1}  &  =-\frac{1}{\alpha_{n-1}}t_{n}+b_{n},\\
x_{n+1}  &  =\alpha_{n}x_{n}+t_{n+1}.
\end{align*}

\end{example}

The next result is concerned with the case of constant coefficients. The
straightforward proof is omitted.

\begin{corollary}
Let $\{b_{n}\}$ be a given sequence in a non-trivial field $\mathcal{F}$ and
let $\{a_{i}\}$, $i=1,\ldots,k$ be constants in $\mathcal{F}$ such that
$a_{k}\not =0.$

(a) The non-homogeneous linear equation of order $k+1$%
\begin{equation}
x_{n+1}=a_{0}x_{n}+a_{1}x_{n-1}+\cdots+a_{k}x_{n-k}+b_{n} \label{cclin}%
\end{equation}
has a linear form symmetry with eigensequence $\{\alpha_{n}\}$ for every
solution $\{\alpha_{n}\}$ in $\mathcal{F}$ of the following autonomous Riccati
equation of order $k$%
\begin{equation}
\alpha_{n}=a_{0}+\frac{a_{1}}{\alpha_{n-1}}+\frac{a_{2}}{\alpha_{n-1}%
\alpha_{n-2}}+\cdots+\frac{a_{k}}{\alpha_{n-1}\cdots\alpha_{n-k}}.
\label{ccric}%
\end{equation}
(b) Every fixed point of (\ref{ccric}) in $\mathcal{F}$ is a nonzero root of
the characteristic polynomial of (\ref{cclin}), i.e.,%
\begin{equation}
\lambda^{k+1}-a_{0}\lambda^{k}-a_{1}\lambda^{k-1}-\cdots-a_{k-1}\lambda-a_{k}
\label{cp}%
\end{equation}
and thus, an eigenvalue of the homogeneous part of (\ref{cclin}) in
$\mathcal{F}$. As constant solutions of (\ref{ccric}) such eigenvalues are
constant eigensequences of (\ref{cclin}).
\end{corollary}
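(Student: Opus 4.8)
The plan is to obtain both parts as immediate specializations of the results already established for the general linear equation, which is why the authors call the proof straightforward. For part (a) I would simply set $a_{i,n}=a_{i}$ (independent of $n$) in Corollary \ref{glinsc}. Under this substitution the standing hypothesis $a_{k,n}\neq0$ of Corollary \ref{glinsc} is reproduced by the assumption $a_{k}\neq0$, and the variable-coefficient Riccati equation (\ref{ricn}) collapses term-by-term to the autonomous Riccati equation (\ref{ccric}). Corollary \ref{glinsc} then guarantees directly that every solution $\{\alpha_{n}\}$ in $\mathcal{F}$ of (\ref{ccric}) is an eigensequence of (\ref{cclin}) furnishing a linear form symmetry of type (\ref{tdfsl}); no new argument is needed.

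For part (b) I would start from a fixed point $\alpha\in\mathcal{F}$ of (\ref{ccric}), that is, a constant sequence $\alpha_{n}\equiv\alpha$ satisfying that equation. The first point to record is that $\alpha\neq0$: this is forced both because an eigensequence must lie in $\mathcal{F}\backslash\{0\}$ by definition and because the denominators $\alpha_{n-1}\cdots\alpha_{n-j}$ in (\ref{ccric}) must be invertible. With $\alpha\neq0$ available, I would clear denominators by multiplying the fixed-point identity
\[
\alpha=a_{0}+\frac{a_{1}}{\alpha}+\frac{a_{2}}{\alpha^{2}}+\cdots+\frac{a_{k}}{\alpha^{k}}
\]
through by $\alpha^{k}$, obtaining
\[
\alpha^{k+1}=a_{0}\alpha^{k}+a_{1}\alpha^{k-1}+\cdots+a_{k-1}\alpha+a_{k}.
\]
Reading this as $\alpha^{k+1}-a_{0}\alpha^{k}-\cdots-a_{k-1}\alpha-a_{k}=0$ exhibits $\alpha$ as exactly a root of the characteristic polynomial (\ref{cp}), and since $\alpha\neq0$ it is a nonzero root, hence a nonzero eigenvalue of the homogeneous part of (\ref{cclin}).

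To finish, I would observe that such a fixed point $\alpha$ is by construction a constant solution $\alpha_{n}\equiv\alpha$ of the autonomous Riccati equation, so part (a) applies and shows this constant sequence is an eigensequence of (\ref{cclin}). I do not anticipate any genuine obstacle: the computation is the routine clearing of denominators, reversible because $\alpha$ is invertible. The only step deserving a moment's care is the justification that $\alpha\neq0$, without which both the multiplication by $\alpha^{k}$ and the Riccati equation itself would be meaningless; once that is noted, parts (a) and (b) follow without further effort.
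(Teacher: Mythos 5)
Your proof is correct and is exactly the ``straightforward proof'' the paper omits: part (a) is the specialization $a_{i,n}=a_{i}$ of Corollary \ref{glinsc}, and part (b) is the clearing of denominators in the fixed-point identity (with the correct observation that $\alpha\neq 0$ is forced), after which the constant sequence is an eigensequence by part (a). Nothing is missing.
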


\begin{example}
Consider the autonomous second-order linear difference equation%
\begin{equation}
x_{n+1}=x_{n}+x_{n-1}\label{linf}%
\end{equation}
Eq.(\ref{linf}) has two real eigenvalues%
\[
\alpha_{\pm}=\frac{1\pm\sqrt{5}}{2}%
\]
as roots of the characteristic polynomial $\lambda^{2}-\lambda-1$ or
equivalently, as fixed points of the Riccati equation%
\begin{equation}
\alpha_{n}=1+\frac{1}{\alpha_{n-1}}\label{ricf}%
\end{equation}
in the field $\mathcal{F=}\mathbb{R}.$ Thus each of $\alpha_{+}$ and
$\alpha_{-}$ is a constant eigensequence of (\ref{linf}) in $\mathbb{R}$ and
the following SC factorization is obtained in $\mathbb{R}$:%
\begin{align*}
t_{n+1} &  =-\frac{1}{\alpha_{+}}t_{n}=\alpha_{-}t_{n},\\
x_{n+1} &  =\alpha_{+}x_{n}+t_{n+1}.
\end{align*}
Note that the SC factorization above has constant coefficients also. We note
further that since $\alpha_{\pm}$ are irrational the above SC factorization is
not valid if $\mathcal{F=}\mathbb{Q}$ the field of rational numbers. In fact,
since the characteristic polynomial has no rational roots, it follows that
there are no constant eigensequences for (\ref{linf}) in $\mathbb{Q}$.
However, Riccati equation (\ref{ricf}) is a rational equation and thus with a
rational initial value $\alpha_{0}$ the corresponding solution of (\ref{ricf})
is a solution (non-constant) in $\mathbb{Q}.$ For instance, if $\alpha_{0}=1$
then the corresponding solution of (\ref{ricf}) is $\alpha_{n}=\varphi
_{n+1}/\varphi_{n}$ where $\{\varphi_{n}\}$ is the Fibonacci sequence
1,1,2,3,5,8,\ldots\ This rational eigensequence yields the following SC
factorization of (\ref{linf}) that is valid in $\mathbb{Q}$:%
\begin{align*}
t_{n+1} &  =-\frac{\varphi_{n-1}}{\varphi_{n}}t_{n},\\
x_{n+1} &  =\frac{\varphi_{n+1}}{\varphi_{n}}x_{n}+t_{n+1}.
\end{align*}
We note that $\lim_{n\rightarrow\infty}\varphi_{n+1}/\varphi_{n}=\alpha_{+}$
in the above factorization; in this way the factorization over rationals is
related to the earlier factorization over the reals. In a similar fashion, the
equation%
\begin{equation}
x_{n+1}=x_{n}-x_{n-1}\label{linf0}%
\end{equation}
has two complex eigenvalues%
\[
\alpha_{\pm}=\frac{1\pm i\sqrt{3}}{2}%
\]
that are roots of $\lambda^{2}-\lambda+1.$ Thus, (\ref{linf0}) has no constant
eigensequences in $\mathbb{R}$ but it does have non-constant real
eigensequences since the Riccati equation%
\[
\alpha_{n}=1-\frac{1}{\alpha_{n-1}}%
\]
with the initial value $\alpha_{0}=2$ has a solution%
\[
\left\{  2,\frac{1}{2},-1,2,\frac{1}{2},-1,\ldots\right\}
\]
of period three in $\mathbb{R}$ with a corresponding real SC\ factorization%
\[
t_{n+1}=-\frac{1}{\alpha_{n-1}}t_{n},\quad x_{n+1}=\alpha_{n}x_{n}+t_{n+1}.
\]
In contrast to the factorization of Eq.(\ref{linf}) there is no simple relationship
between the factorization of (\ref{linf0}) over the real numbers and its factorization 
with constant eigensequences over the complex numbers.
\end{example}

\begin{remark}
Is it possible that a linear difference equation has no eigensequences,
constant or otherwise in a given field $\mathcal{F}$ because the associated
Riccati equation has no solutions at all in $\mathcal{F}?$

We know the answer to this question in some cases. If we have a linear
equation (homogeneous or not) with constant coefficients in an algebraically
closed field $\mathcal{F}$ (e.g., the field $\mathbb{C}$ of complex numbers)
then $\mathcal{F}$ always contains constant eigensequences, namely, the roots
of the characteristic polynomial (\ref{cp}). On the other hand, for the finite
field $\mathbb{Z}_{3}=\{0,1,2\}$ with addition and multiplication defined
modulo 3, the linear equation (\ref{linf}) has no eigensequences. This can be
shown by testing each of the two possible nonzero initial values 1,2 in the
Riccati equation (\ref{ricf}) to verify that both lead to the singularity at 0:%
\begin{align*}
\alpha_{0}  &  =2\Rightarrow\alpha_{1}=1+\frac{1}{2}=1+2=0,\\
\alpha_{0}  &  =1\Rightarrow\alpha_{1}=1+1=2\Rightarrow\alpha_{2}=0.
\end{align*}
The answer to the question of existence of eigensequences in the general case
is not known at this time; in fact, it is not known if a linear equation with
real coefficients exists that has no real eigensequences. For
\textquotedblleft large\textquotedblright\ fields such as $\mathbb{R}$ or
$\mathbb{C}$ it seems likely that the general linear equation (\ref{genlin})
has an eigensequence in the field.
\end{remark}

The occurrence of Riccati difference equation in Corollary \ref{glinsc} may
seem less surprising if we recall some basic facts from \cite{hsarx}. In
particular, the homogeneous part of (\ref{genlin}) is a homogeneous equation
of degree one relative to the multiplicative group $\mathcal{F}\backslash
\{0\}.$ Therefore, it has an inversion form symmetry and the factor equation
of its SC factorization is none other than the Riccati equation (\ref{ricn}).
Using this fact it is possible to restate Corollary \ref{glinsc} without
explicit reference to the Riccati equation as follows.

\begin{corollary}
\label{hom}Assume that the homogeneous part of Eq.(\ref{genlin}) has a
solution $\{y_{n}\}$ in the field $\mathcal{F}$ such that $y_{n}\not =0$ for
all $n.$ Then $\{y_{n+1}/y_{n}\}$ is an eigensequence of (\ref{genlin}) whose
SC factorization is given by the pair of equations (\ref{genlinf}) and
(\ref{genlincf}).
\end{corollary}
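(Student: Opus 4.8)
The plan is to reduce the statement to Corollary \ref{glinsc} by verifying directly that the sequence $\alpha_n = y_{n+1}/y_n$ is a solution of the Riccati equation (\ref{ricn}) associated with (\ref{genlin}). First, since $\{y_n\}$ is a solution of the homogeneous part of (\ref{genlin}) with $y_n \neq 0$ for all $n$, the ratios $\alpha_n = y_{n+1}/y_n$ are well defined and lie in $\mathcal{F}\backslash\{0\}$; this already supplies the nonvanishing property required of an eigensequence.

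The key computation is to divide the homogeneous recurrence $y_{n+1} = a_{0,n}y_n + a_{1,n}y_{n-1} + \cdots + a_{k,n}y_{n-k}$ through by $y_n$. The left-hand side becomes $\alpha_n$, while each term $a_{j,n}\,y_{n-j}/y_n$ on the right is rewritten by telescoping. Since $y_{n-i}/y_{n-i+1} = 1/\alpha_{n-i}$, one has $y_{n-j}/y_n = \prod_{i=1}^{j}\bigl(y_{n-i}/y_{n-i+1}\bigr) = 1/(\alpha_{n-1}\alpha_{n-2}\cdots\alpha_{n-j})$. Substituting these expressions term by term yields precisely the right-hand side of (\ref{ricn}), so $\{\alpha_n\}$ solves the Riccati equation.

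With the Riccati property established, I would simply invoke Corollary \ref{glinsc}, which asserts that every solution $\{\alpha_n\}$ of (\ref{ricn}) in $\mathcal{F}$ is an eigensequence of (\ref{genlin}) and produces the SC factorization given by (\ref{genlinf}) and (\ref{genlincf}). Applying this to $\alpha_n = y_{n+1}/y_n$ completes the argument.

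I expect no genuine obstacle here; the only point requiring care is the bookkeeping of the telescoping indices, so that the denominators $\alpha_{n-1}\cdots\alpha_{n-j}$ line up with those of (\ref{ricn}) for each $j = 1,\ldots,k$. One could alternatively argue through the homogeneity observation preceding the statement — the homogeneous part of (\ref{genlin}) is homogeneous of degree one over the multiplicative group $\mathcal{F}\backslash\{0\}$, and its inversion form symmetry delivers the Riccati equation as the factor directly — but the division-and-telescope route above is the most self-contained and makes the identification of $\{y_{n+1}/y_n\}$ as an eigensequence fully explicit.
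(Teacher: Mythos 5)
Your proposal is correct and follows essentially the same route as the paper: divide the homogeneous recurrence by $y_n$, telescope the ratios $y_{n-j}/y_n$ into products of the $\alpha_{n-i}^{-1}$ to recover the Riccati equation (\ref{ricn}), and then invoke Corollary \ref{glinsc} for the SC factorization. No discrepancies to note.
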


\begin{proof}
It is given
that $\{y_{n}\}$ satisfies the homogeneous part of (\ref{genlin}), i.e.,%
\[
y_{n+1}=a_{0,n}y_{n}+a_{1,n}y_{n-1}+a_{2,n}y_{n-2}+\cdots+a_{k,n}y_{n-k}.
\]

Since $y_{n}\not =0$ for all $n,$ we may divide the above equation by $y_{n}$
to obtain%
\begin{align*}
\frac{y_{n+1}}{y_{n}}  &  =a_{0,n}+a_{1,n}\frac{y_{n-1}}{y_{n}}+a_{2,n}%
\frac{y_{n-2}}{y_{n}}+\cdots+a_{k,n}\frac{y_{n-k}}{y_{n}}\\
&  =a_{0,n}+a_{1,n}\frac{y_{n-1}}{y_{n}}+a_{2,n}\frac{y_{n-2}}{y_{n-1}}%
\frac{y_{n-1}}{y_{n}}+\cdots+a_{k,n}\frac{y_{n-k}}{y_{n-k+1}}\cdots
\frac{y_{n-1}}{y_{n}}.
\end{align*}

Now defining $\alpha_{n}=y_{n+1}/y_{n}$ for all $n$ and substituting these
terms in the last equation above yields the Riccati equation (\ref{ricn}).
Thus $\{y_{n+1}/y_{n}\}$ is an eigensequence of (\ref{genlin}) in
$\mathcal{F}\backslash\{0\}$, as claimed. The SC factorization is obtained as
in the proof of Corollary \ref{glinsc}.
\end{proof}

\begin{corollary}
In Eq.(\ref{genlin}) let $\{a_{i,n}\}$, $i=1,\ldots,k$ and $\{b_{n}\}$ be
sequences of real numbers with $a_{i,n}\geq0$ for all $i,n$ and $a_{k,n}>0$
for all $n.$ Then (\ref{genlin}) has an eigensequence $\{y_{n+1}/y_{n}\}$ and
a SC factorization in $\mathbb{R}$ given by the pair of equations
(\ref{genlinf}) and (\ref{genlincf}).
\end{corollary}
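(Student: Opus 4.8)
The plan is to reduce everything to Corollary~\ref{hom}, which guarantees both the eigensequence and the SC factorization (\ref{genlinf})--(\ref{genlincf}) as soon as the homogeneous part of (\ref{genlin}) possesses a real solution $\{y_n\}$ that is nowhere zero. Thus the entire task is to produce such a solution under the sign hypotheses on the coefficients.

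First I would fix positive initial values, say $y_{-k}=y_{-k+1}=\cdots=y_0=1$ (any positive choice works), and generate $\{y_n\}$ forward by the homogeneous recursion
\[
y_{n+1}=a_{0,n}y_n+a_{1,n}y_{n-1}+\cdots+a_{k,n}y_{n-k}.
\]
The claim is that $y_n>0$ for every $n\ge -k$. I would prove this by induction on $n$: the base case holds by the choice of initial values, and if $y_{n-k},\ldots,y_n>0$ then each summand $a_{i,n}y_{n-i}$ is nonnegative because $a_{i,n}\ge 0$, while the final summand satisfies $a_{k,n}y_{n-k}>0$ since $a_{k,n}>0$ and $y_{n-k}>0$. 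Hence $y_{n+1}\ge a_{k,n}y_{n-k}>0$, completing the induction.

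With $\{y_n\}$ in hand as a solution of the homogeneous part having $y_n\neq 0$ for all $n$, Corollary~\ref{hom} applies verbatim: $\{y_{n+1}/y_n\}$ is an eigensequence of (\ref{genlin}) in $\mathcal{F}\backslash\{0\}$ with $\mathcal{F}=\mathbb{R}$, and its SC factorization is the pair (\ref{genlinf}) and (\ref{genlincf}).

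There is essentially no serious obstacle here; the only point that requires care is the \emph{strict} (rather than merely nonnegative) positivity needed to keep the sequence away from zero, which is exactly where the hypothesis $a_{k,n}>0$ is used. Had we only assumed $a_{k,n}\ge 0$, the solution could vanish (e.g.\ if all coefficients were zero at some step), and the quotient $y_{n+1}/y_n$ would be undefined; the strict inequality on the leading coefficient is precisely what rules this out and makes the reduction to Corollary~\ref{hom} legitimate.
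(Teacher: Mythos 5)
Your proposal is correct and follows essentially the same route as the paper: choose all-ones (positive) initial values, observe that the nonnegativity of the $a_{i,n}$ together with $a_{k,n}>0$ forces the homogeneous solution to stay strictly positive, and then invoke Corollary~\ref{hom}. The paper states the positivity without the explicit induction; your spelled-out inductive step and the remark on why strictness of $a_{k,n}>0$ is needed are accurate elaborations of the same argument.
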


\begin{proof}
If we choose $y_{-j}=1$ for $j=0,\ldots,k$ then the corresponding solution
$\{y_{n}\}$ of the homogeneous part of (\ref{genlin}) is a sequence of
positive real numbers. Now an application of Corollary \ref{hom} completes the proof.
\end{proof}

\end{document}